\documentclass[conference]{IEEEtran} 
\usepackage[cmex10]{amsmath}
\usepackage{amssymb,amsbsy,amsfonts,bbm,mathrsfs,amsthm,amscd,array}
\usepackage{graphicx, textcomp, verbatim}

\usepackage{amsmath,amssymb,amsbsy,amsfonts,graphicx,textcomp,bbm}

\hyphenation{op-tical net-works semi-conduc-tor}

\newtheorem{thm}{Theorem}[section]

\newtheorem{prepro}{{\bf Proposition}}[section]

\newtheorem{precor}{{\bf Corollary}}[section]

\newtheorem{preconj}{{\bf Conjecture}}

\newtheorem{preremark}{{\bf Remark}}[section]

\newtheorem{predef}{{\bf Definition}}

\newtheorem{prelem}{{\bf Lemma}}[section]
\newenvironment{lem}{\begin{prelem}{\hspace{-0.5
               em}{\bf.}}}{\end{prelem}}
               
\newtheorem{preclaim}{{\bf Claim}}[section]

\newtheorem{prefact}{{\bf Fact}}[section]

\def\ux {\underline{x}}
\def\uth{\underline{\theta}}
\def\uxa {\underline{x}_{\partial{a}}}

\def\uxai {\underline{x}_{\partial{a} \backslash i}}

\def\E{\mathbb{E}}

\def\arg{\text{\rm{arg}}}
\def \cX{{\mathcal{X}}}

\def\da{\partial a}
\def\di{\partial i}
\def\xda{\ux_{\partial a}}

\def\cost{{\cal C}}
\def\obj{{\cal O}}
\def\utheta{\underline{\theta}}
\def\ux{\underline{x}}
\def\ur{\underline{r}}

\def\reals{{\mathbb R}}
\def\MARG{{\rm MARG}}
\def\LOC{{\rm LOC}}

\def\cS{{\cal S}}

\def\bbI{{\mathbb{I}}}

\def\optproblemzero{{(\text{P}_0)}}
\def\optproblem{{(\text{P})}}
\def\argmin{{\arg \!\min}}

\def\maxprod{{\sc Max-Product \;}}
\def\robust{{\sc Robust Max-Product \;}}

\def\cvx{{\sc cvx}}




\begin{document}
\title{Robust Max-Product Belief Propagation}

\author{\IEEEauthorblockN{Morteza Ibrahimi${}^{*}$, Adel Javanmard${}^{*}$, Yashodhan Kanoria${}^{*}$ and Andrea Montanari${}^{* \dagger}$}
\IEEEauthorblockA{${}^{*}$Department of Electrical Engineering, Stanford University\\
${}^{\dagger}$Department of Statistics, Stanford University\\
Email: \{ibrahimi, adelj, ykanoria, montanar\}@stanford.edu}
}
\maketitle

\begin{abstract}
We study the problem of optimizing a graph-structured objective function
under \emph{adversarial} uncertainty. This problem can be modeled 
as a two-persons zero-sum game between an Engineer and Nature.
The Engineer controls a subset of the variables (nodes in the graph),
and tries to assign their values to maximize an objective function. 
Nature controls the complementary subset of variables and tries to
minimize the same objective. This setting encompasses 
estimation and optimization problems under model uncertainty, and
strategic problems with a graph structure.
Von Neumann's minimax theorem guarantees the existence of a (minimax)
pair of randomized strategies that provide optimal robustness for each
player against its adversary. 

We prove several structural properties
of this strategy pair in the case of graph-structured payoff 
function. In particular, the randomized minimax strategies (distributions
over variable assignments) can be chosen in such a way to satisfy
the Markov property with respect to the graph. This significantly
reduces the problem dimensionality. 
Finally we introduce a message passing algorithm to 
solve this minimax problem. The algorithm generalizes 
max-product belief propagation to this new domain.
\end{abstract}

\section{Introduction}
\label{sec:introduction}

A two-persons zero-sum game in normal form is specified by an objective
(or utility)
function $\obj: (x,\theta)\mapsto \obj(x,\theta)$, whereby $x\in\cX$ is the
strategy of the first player (which we shall call by convention the
Engineer), while $\theta\in \Theta$ is the strategy of the second player
(Nature). Once the strategy pair $(x,\theta)$ is chosen, the Engineer
earns from  Nature an amount $\obj(x,\theta)$. The two players
optimize their strategies with respect to the opposite objective of
maximizing (Engineer) or minimizing (Nature) the objective.
Zero-sum games capture strategic situations in which agents compete
for a fixed, limited pool of resources
\cite{VonNeumannBook,BinmoreBook,Nisan:2007:AGT:1296179}. 
Remarkably, they have found  broad applicability beyond 
economic theory, including areas such as online prediction and learning 
\cite{CesaBianchi}, and statistical decision theory \cite{BergerBook}.
Here statistical estimation is viewed as a game between a Statistician
(who tries to design the best statistical procedure) and Nature (who
chooses the worst parameters).

Closer to our motivation, 
a large variety of optimal design problems in engineering can
be reduced to maximizing an appropriate objective function. The form of
this function is normally dictated by a model of the underlying
system, with parameters to be estimated empirically. Of course the parameter
estimation process is inherently imprecise and, more importantly, any
model of a
real system necessarily overlooks a multitude of effects.
This remark has motivated the burgeoning   fields of \emph{robust
  optimization}  and \emph{robust control} \cite{NemirovskiBook}. 
In this context, one considers a family of objective functions
$x\mapsto \obj(x;\theta)$, with $x$ the design variables, and $\theta\in
\Theta$ a vector of parameters. Rather than designing for a `nominal' $\theta_*\in\Theta$,
one then tries to maximize the worst case cost
$\min_{\theta\in\theta}\obj(x;\theta)$. The problem is hence reduced
to a two-players zero-sum game.

Robust optimization theory provides a wealth of structural
information, and efficient algorithms 
for classes of objective functions $\obj(\,\cdot\,,\theta): x\mapsto \obj(x,\theta)$
that are convex in the control variables $x$. The present paper takes 
a complementary point of view. We assume that both $x$ and $\theta$
take values in  high-dimensional, discrete spaces. Explicitly,
$x=\ux\in \cX^{V}$ and $\theta=\utheta\in \Theta^{F}$
where $\cX$, $\Theta$ are finite alphabets and $V$, $F$ are finite
index sets. Letting $|V|=n$ and $|F|=m$, a pair of pure strategies is specified by two vectors:
the Engineer controls variables $\ux= (x_1,x_2,\dots,x_n)$ indexed by the elements of $V$, 
while Nature controls parameters $\utheta= (\theta_1,\theta_2,\dots,\theta_n)$.
Within this setting we aim at finding strategies for the Engineer
which are optimally robust with respect to Nature.

Of course, this general problem is NP-hard, and indeed so
even in absence of any adversary (since it includes MaxSat as a
special case). Our approach is to exploit 
simplifications that follow from the underlying factorization
structure of the objective function. More precisely, we shall assume that
the objective $\obj(\ux,\utheta)$ can be expressed as a sum of terms which
are local on a graph $G=(V,F,E)$, whereby $V$ is the set of nodes
controlled by the Engineer, $F$ the set of nodes controlled by Nature,
and $E$ the edge set. 

Graph-structured objective functions naturally arise from probabilistic
graphical models \cite{KollerBook}. In particular, if $\mu(\ux)$ is the probability of
configuration $\ux$ under a probabilistic graphical model, then
$\log \mu(\ux)$ is an objective function that factors
additively. Hence MAP estimation falls in the class of
optimization problems considered here. With a slight abuse of
terminology, we shall use the term `graphical model' to refer to
general graph-structured objective functions, even if these are not
originated from probability distributions.

The application to graphical models also clarifies the need for robustness. 
Graphical models are particularly effective at expressing complex relationships.
Think for instance to the subtle relationships between diseases and symptoms in a medical
diagnostic systems \cite{GraphMed}. 
Such relationships are normally modeled
through simple parametric families of conditional probabilities
(e.g. logit or noisy OR). However, it is not expected that these
parametric expressions coincide with the 
`true' conditional distributions. The only solid justification for
this methodology is that the resulting predictions are robust with
respect to the details of the model itself. Robustness is therefore
implicitly assumed, but has never been carefully
investigated and accounted for  (but see Section \ref{sec:related_work} for related work).

Apart from the use of graphical models, we achieve significant structural
simplification by convexifying the space of strategies, i.e. introducing
randomized (mixed) strategies. This is a well
established path within game theory. 
The Engineer has at her disposal a stochastic device generating
strategy $\ux\in \cX^V$ with probability $p(\ux)$, and plays it,
while Nature plays strategy $\utheta\in\Theta^F$ with
probability $q(\utheta)$.
The Engineer tries to maximize the expected utility
$\E_{(p,q)}\{\obj(\ux,\utheta)\}$, while Nature tries to minimize the same
quantity. A crucial consequence is that the
problems faced by the two players  become dual linear programs (LPs). In
particular, the celebrated Von Neumann's minimax theorem ensures the
existence of  a saddle point, i.e. a strategy pair $(p^*,q^*)$ such
that for any other strategies $p$, $q$
\small 
\begin{eqnarray}
\E_{(p,q^*)}\{\obj(\ux,\utheta)\}\le 
\E_{(p^*,q^*)}\{\obj(\ux,\utheta)\}\le 
\E_{(p^*,q)}\{\obj(\ux,\utheta)\} \,. \label{eq:SaddlePoint}
\end{eqnarray}
\normalsize

This is equivalent to requiring that $(p^*,q^*)$ forms a Nash equilibrium.
The saddle point condition implies in particular that the order of play does not matter:
$\max_p\min_q\E_{(p,q)}\{\obj(\ux,\utheta)\}=
\min_q\max_p\E_{(p,q)}\{\obj(\ux,\utheta)\}$.
In words, $p^*$ provides to the Engineer optimal robustness against
Nature's adversarial choice, and indeed the same as if this choice was
known in advance. Remarkably, the worst-case expected utility of strategy
$p^*$ is in general strictly larger than the  utility of any
pure strategy.

Notice that convexification of the strategies space  is achieved at the expense of an
exponential blow-up in dimensionality. While a pure strategy for the 
Engineer is a (discrete) vector of length $n$, a mixed strategy is a (probability)
vector of length $|\cX|^n$. Hence, by itself, convexification does not reduce the
problem complexity.

In the next section we will 
illustrate key ideas and questions on a simple example,
and then describe  our general formalism and contributions.
In Section \ref{sec:Algorithm} we  derive a message passing algorithm,
called \robust
to construct minimax strategies. Finally, we review related work in
Section \ref{sec:related_work}.

\section{An example and main contributions}
\label{sec:examples_main}
\begin{figure}[t]
\phantom{a}
\vspace{-0.5cm}
\includegraphics[width=.46\textwidth]{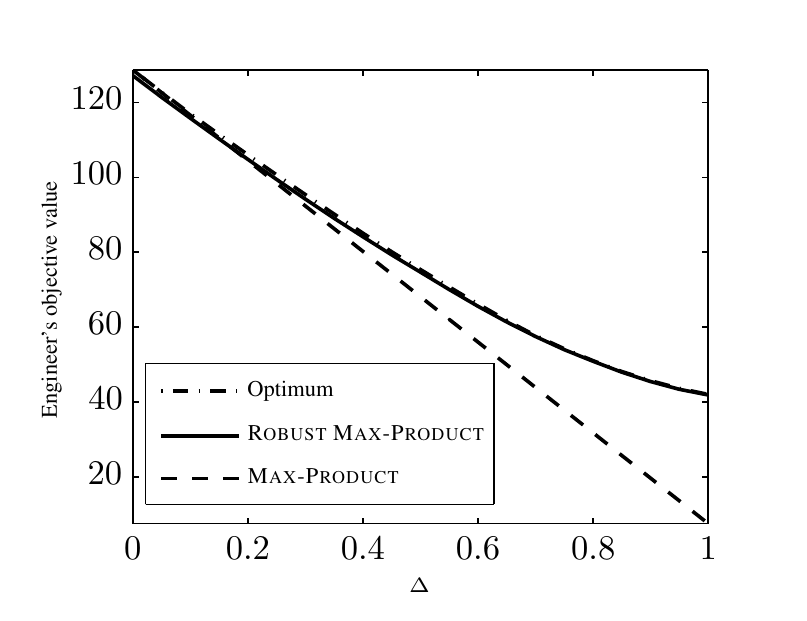}
\vspace{-0.5cm}
\caption{Engineer's objective vs. $\Delta$ for the Ising model example.}
\vspace{-0.25cm}
\label{fig:obj_vs_delta}
\end{figure}

\begin{figure}[t]
\phantom{a}
\vspace{-0.25cm}
\includegraphics[width=.43\textwidth]{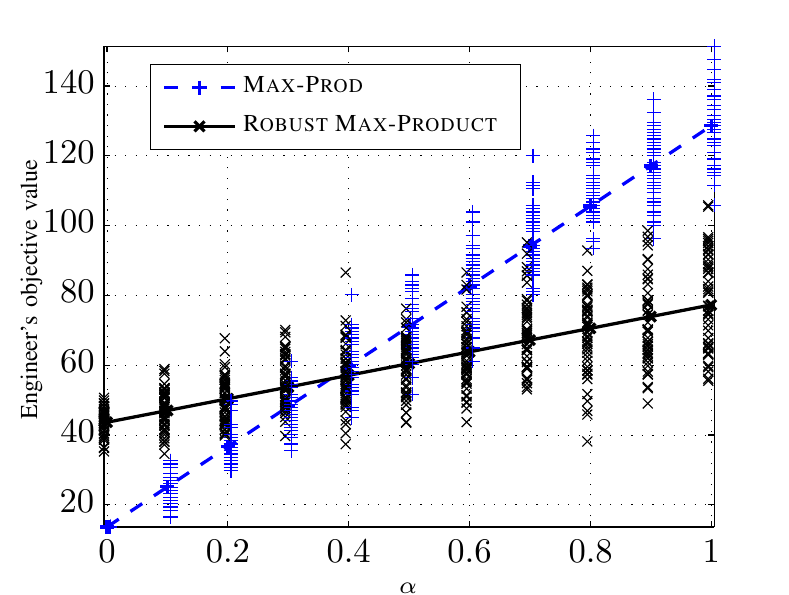} 
\vspace{-0.25cm}
\caption{Ising model example; the solid curves indicate the
  Engineer's expected payoff and the scattered points show the payoff for random
  instantiations of mixed strategies. For each value of $\alpha$, there are $50$ random instantiations of mixed strategies.}
\vspace{-0.25cm}
\label{fig:comp_robust_maxprod}
\end{figure}
\noindent {\bf Ising models.}
The Ising model is a pairwise graphical model with binary
alphabet $\cX = \{-1, +1\}$. The unnormalized log probability
$\obj(\ux, \uth) = \log\mu(\ux)+{\rm const.}$ can be written as 
\begin{eqnarray}\label{eq:ising_obj}
\begin{split}
\obj(\ux, \uth)= \sum_{(i,j)\in E}\psi_{ij}(x_ix_j;\theta_{ij}) + \sum_{i\in V}\psi_{i}(x_i;\theta_{i}),\\
\psi_{ij}(x_ix_j;\theta_{ij}) = \theta_{ij} x_i x_j, \;\psi_{i}(x_i;\theta_{i}) = \theta_{i} x_i.
\end{split}
\end{eqnarray}
In practice, the parameters $\uth$ are learned
from the data and hence are inaccurate. 
The Engineer's challenge is to find an strategy $p^*(\ux)$ which
maximize $\E_{(p(\ux), q^*(\uth))} \obj(\ux,\uth)$ for 
the worst case distribution $q^*$ of the uncertain parameter $\uth$.

Minimax strategies depend on the domain of $\utheta$.
We considered a family of models with parameters $h,\Delta\ge 0$. 
Single variable potentials have  $\theta_i \!\sim\! U[-h, h]$, i.e., $\theta_i$ is uniformly distributed between $-h$ and $h$, and
edges belong to two classes: \emph{positive} and 
\emph{negative}. Positive edges have $\theta_{ij}\in
\{+1-\Delta,+1,+1+\Delta\}$ while negative edges have $\theta_{ij}\in
\{-1-\Delta,-1,-1+\Delta\}$. 

Finding minimax strategies for this model is NP-hard,
even in the special case of $\Delta=0$, $h=0$. Indeed, when all edges are
negative, this reduces to the MAXCUT problem. 

We consider a random tree with $n = 93$ nodes as the underlying graph and perform the following experiments. 

\smallskip

\noindent \emph{First Experiment: } We apply the \robust algorithm 
derived in Section \ref{sec:Algorithm} to find the Engineer's optimum
strategy, $p^*$. 
We further compare it with the case the Engineer ignores Nature and apply
the classical \maxprod to the graph with nominal values;
namely $\theta_{ij} = +1$ on positive edges and $-1$ on negative edges.  Finally, in order to check the convergence of \robust, we compute $p^*$ by solving the minimax optimization problem in \cvx \cite{cvx}. Note that the latter is computationally much more expensive than the \robust and \maxprod. Figures~\ref{fig:obj_vs_delta} summarizes the results for different values of $\Delta$. 
\robust was run for $100$ iterations.
For $\Delta = 0$, \maxprod and \robust  are equivalent as Nature has no power. 
As $\Delta$ increases, \robust performs increasingly better compared to \maxprod. 

\smallskip

\noindent \emph{Second Experiment: } Let $p^*$ be the Engineer's strategy given by \robust algorithm, and let $\tilde{p}^*$ be the one obtained by applying \maxprod considering the nominal values of the parameters. Finally, denote by $q^*$ and $\tilde{q}^*$, Nature's best responses against $p^*$ and $\tilde{p}^*$. In this experiment, we compare the performance of strategies $p^*$ and $\tilde{p}^*$, when Nature deviates from her optimal strategies $q^*, \tilde{q}^*$. More specifically, Nature's strategy is chosen to be a mixture of her optimum strategy and the uniform distribution, i.e., $q(\uth)
= (1-\alpha) q^*(\uth) + \alpha |\Theta|^{-m}$ and $\tilde{q}(\uth) =
(1-\alpha) \tilde{q}^*(\uth) + \alpha |\Theta|^{-m}$.  Figure~\ref{fig:comp_robust_maxprod} illustrates the Engineer's payoff for the pairs of strategies $(p^*,q(\uth))$ and $(\tilde{p}^*,\tilde{q}(\uth))$, as $\alpha$ varies. Here, $\alpha =
0$ corresponds to the case Nature chooses her optimum
strategy. \robust outperforms the simple \maxprod in this regime. As
$\alpha$ increases, Nature changes from being adversarial to
completely random for $\alpha = 1$. \maxprod outperforms \robust 
in the latter case, since Nature is no-longer an adversary and one can
design for nominal values.    

\subsection{Main contributions}

We consider a general bipartite
graph (or  factor graph) $G =(V,F,E)$, where nodes in $V$  (\emph{variable
  nodes}, to be denoted by $i,j,k,\dots$) are controlled by the Engineer,  nodes in $F$  (\emph{factor
  nodes}, denoted by $a,b,c,\dots$) are controlled by Nature,  and $E \subseteq V \times F$ is a
set of undirected edges. 
Given $i \in V$ , the set of its neighbors is denoted by $\partial{i}
= \{a \in F : (i, a) \in E \}$. The neighborhood of $a \in F$ ,
denoted by $\partial{a}$, is defined analogously. 

The objective function $\obj:\cX^V\times\Theta^F\to\reals$
 \emph{factors} on graph $G$ if\footnote{While slightly more
   general definitions (symmetric in $\ux$ and $\utheta$) are
   possible, we stick to the present one
   because it is already rich enough to discuss all the key challenges.}  there exists a set of functions
 $\underline{\psi} = \{\psi_a : a \in F\}$, $\psi_a : \cX ^{\partial{a}} \times\Theta\to \reals$ such that
\begin{equation*}
\obj(\ux,\uth) = \sum_{a \in F} \psi_a(\uxa;\theta_a).
\end{equation*}
The functions $\psi_a$ are called \emph{potentials}. There is no loss
of generality in assuming $G$ to be bipartite. If two nodes $i$, $j$
controlled by the same player were neighbors, we could replace them by
a single node with strategy space $\cX'\equiv\cX\times\cX$.

As discussed above, our goal is to find a pair $(p^*,q^*)$ where 
$p^*$ is a probability distribution over $\cX^{V}$, $q^*$ a
distribution over $\Theta^F$, and the pair satisfies the Nash
equilibrium condition  (\ref{eq:SaddlePoint}).
From the point of view of the Engineer, this amounts to solving the problem
\begin{equation}
\label{eqn:main_problem}
p^*= \arg \max_{p} \min_{q} \E_{(p, q)} \obj(\ux,\uth)\, .
\end{equation}
The \emph{support}  $\text{supp}(p)$  of a probability distribution
$p$ is the smallest set $S$ such that $p(S^c) = 0$. 

Since $\E_{(p, q)} \obj(\ux,\uth)$ is linear both in $p$ and $q$,
which belong to the simplex, (\ref{eqn:main_problem})  is equivalent
to an LP problem.
However, the dimensionality of this problem is exponential in the
graph size: even writing down the strategy takes exponential time.
The following result plays a key role in our
approach. 
\begin{thm}
\label{rem:MRF_prod}
In problem~\eqref{eqn:main_problem}, without loss of generality, we
can assume that the Engineer's strategy is a Markov Random Field (MRF)
with factor graph $G$, and that Nature chooses a product distribution.
Explicitly, the Engineer's strategy can be assumed to take the form 
$p^*(\ux) = \prod_{a\in F} f_a(\uxa)$, while Nature's strategy takes
the form  $q^*(\utheta) = \prod_{a\in F}g_a(\theta_a)$.
\end{thm}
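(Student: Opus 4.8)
The plan is to reduce the game to one played on \emph{local marginals}, where both structural claims become transparent. The starting observation is that, since $p$ and $q$ are drawn independently and $\obj(\ux,\utheta)=\sum_{a\in F}\psi_a(\uxa;\theta_a)$, the expected payoff depends on $p$ and $q$ only through their neighborhood marginals. Writing $p_a$ for the marginal of $p$ on $\uxa=(x_i)_{i\in\partial a}$ and $q_a$ for the marginal of $q$ on $\theta_a$, one has
\begin{equation*}
\E_{(p,q)}\obj(\ux,\utheta)=\sum_{a\in F}\sum_{\uxa,\theta_a} p_a(\uxa)\,q_a(\theta_a)\,\psi_a(\uxa;\theta_a),
\end{equation*}
a fixed bilinear form in the collections $\{p_a\}$ and $\{q_a\}$. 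I would fix a saddle point $(p^*,q^*)$, guaranteed by Von Neumann, and then modify each player's strategy so as to preserve its marginals. Since the payoff against \emph{any} opponent strategy is then unchanged, the modified pair is again a saddle point: substituting into \eqref{eq:SaddlePoint}, the inequality $\E_{(p,q^*)}\le\E_{(p^*,q^*)}$ depends only on the marginals of $q^*$, and $\E_{(p^*,q^*)}\le\E_{(p^*,q)}$ only on the marginals of $p^*$, so replacing either strategy by one with the same marginals is harmless.

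For Nature the construction is immediate: the parameters $\theta_a$ are indexed by \emph{distinct} factors, so the product law $\tilde{q}^*(\utheta)=\prod_{a\in F}q^*_a(\theta_a)$ has exactly the marginals $\{q^*_a\}$. Replacing $q^*$ by $\tilde{q}^*$ leaves the payoff unchanged against every $p$, and delivers the required product form with $g_a\equiv q^*_a$. For the Engineer the variables are shared among several factors, so one cannot merely take a product of the $\{p^*_a\}$; these must be stitched into a consistent joint law. Here I would invoke \emph{maximum entropy}: among all distributions whose neighborhood marginals equal $\{p^*_a\}$ (a nonempty, compact, convex set, as it contains $p^*$), choose the one, $\tilde{p}^*$, of maximal Shannon entropy. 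By Lagrangian duality for this marginal-constrained maximization, the optimizer is Gibbs, $\tilde{p}^*(\ux)\propto\exp\bigl(\sum_{a\in F}\lambda_a(\uxa)\bigr)=\prod_{a\in F}f_a(\uxa)$, i.e.\ an MRF with factor graph $G$, where $f_a=e^{\lambda_a}$ and the normalization is absorbed into one factor. Since $\tilde{p}^*$ has marginals $\{p^*_a\}$, it yields the same payoff as $p^*$ against every $q$, so it is again optimal; paired with $\tilde{q}^*$ it gives the claimed saddle point.

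The main obstacle is the boundary case of the entropy argument. When $\{p^*_a\}$ lies on the boundary of the marginal polytope, the maximizer $\tilde{p}^*$ assigns zero probability to some configurations, the multipliers $\lambda_a$ may diverge, and one must work in the \emph{closure} of the exponential family, allowing $f_a(\uxa)=0$. The point to verify is that the zero set of $\tilde{p}^*$ has the product form $\{\ux:\prod_{a}f_a(\uxa)=0\}$. One direction is easy: any $\ux$ with $p^*_a(\uxa)=0$ for some $a$ is forbidden by every feasible joint law and is excluded by $f_a(\uxa)=0$. The converse — that no configuration with all $p^*_a(\uxa)>0$ is forced to zero — is exactly the statement that the closure of the Gibbs family realizes every point of the marginal polytope; establishing this support identity rigorously is the one step requiring care, after which the degenerate maximizer is seen to retain the factorized form and the remaining verifications are routine.
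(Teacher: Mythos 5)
Your proof follows the same skeleton as the paper's: the bilinearity observation that the payoff depends on $(p,q)$ only through the local marginals $\{p_a\},\{q_a\}$, the product construction for Nature, and the reduction of the Engineer's claim to the statement that any achievable collection of marginals can be realized by an MRF on $G$. The paper disposes of this last step by citation, whereas you attempt to prove it via maximum entropy (which is indeed the standard argument behind the cited result); your extra care in checking that marginal-preserving replacements preserve the saddle point is correct.

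The gap is in the step you yourself flag, and it is worse than ``requiring care'': the support identity you propose to verify is false in general, and with it the exact factorization, even allowing $f_a=0$. Counterexample: let $G$ be a triangle --- variables $x_1,x_2,x_3\in\{0,1\}$ and three pairwise factors --- and take each pairwise marginal $p_a$ uniform on $\{(0,0),(0,1),(1,0)\}$. These marginals are consistent, and their \emph{unique} joint realization is the uniform distribution over the three configurations $(1,0,0),(0,1,0),(0,0,1)$: since $p_a(1,1)=0$ for every pair, at most one variable equals $1$, and the marginal constraints then force $p(0,0,0)=0$. So the configuration $(0,0,0)$ has all its local marginals positive yet is forced to zero --- your converse direction fails --- and the unique realization admits no factorization $\prod_a f_a(\uxa)$: $p(0,0,0)=0$ would force some $f_a(0,0)=0$, which in turn annihilates one of $p(1,0,0),p(0,1,0),p(0,0,1)$, each equal to $1/3$. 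The max-entropy distribution here lies only in the \emph{closure} of the Gibbs family (it is the $\epsilon\to0$ limit of MRFs with $f_a(0,0)=\epsilon$, $f_a(1,1)=0$, $f_a(0,1)=f_a(1,0)=1$), and in the discrete setting this closure is strictly larger than the family itself. The failure even reaches the theorem's literal statement: taking $\Theta=\{(0,0),(0,1),(1,0)\}$ and $\psi_a(\uxa;\theta_a)=\mathbbm{1}\{\uxa=\theta_a\}$, the Engineer's objective becomes $\sum_a\min\big(p_a(0,0),p_a(0,1),p_a(1,0)\big)$, which is uniquely maximized over $\MARG(G)$ at the marginals above; in this game no optimal $p^*$ has exact product form, and MRF strategies only approach the optimum without attaining it. What survives --- and what the paper actually needs for its LP over $\MARG(G)$ and $\LOC(G)$ --- is precisely your first reduction: only the marginals matter. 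The exact MRF statement is true on trees (Bethe factorization, with the convention $0/0=0$) but on general graphs holds only in the closure sense. Note that the paper's own citation-based proof silently inherits the same boundary issue, so you have in effect located a gap that is shared by the paper.
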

\begin{IEEEproof} 
Note that the Engineer's pay off is given by
{\small \begin{align*}
\E_{(p(\ux), q(\uth))} \sum_{a \in F} \psi_a(\uxa;\theta_a)
= \sum_{a \in F} \E_{q_a (\theta_a)} \E_{p_a(\uxa)} \psi_a(\uxa; \theta_a).
\end{align*}}

Therefore, only the marginals $q_a (\theta_a)$ of the Nature's  distribution play role in the pay off. Hence, we can assume that Nature has a product distribution $q(\uth) = \prod_{a \in F} q_a (\theta_a)$.

Similarly, only the marginals $p_a(\uxa)$ appears in the pay off. Thereby, without loss of generality we can assume that the Engineer's distribution is an MRF with respect to $G$. This follows from the fact that  for all factor graphs $G$ and for all joint distributions $p(\ux)$, there exists a distribution $\tilde{p}(\ux)$ that is representable as an MRF with graph $G$ such that $p_a(\uxa) = \tilde{p_a}(\uxa)$~\cite{Nisan:2007:AGT:1296179}. 
\end{IEEEproof}

Notice that, for a graph $G$ with bounded degree, a MRF can be
specified by $O(|V|)$ parameters. In particular, the MRF is completely
specified by the marginals $p_a(\ux_{\da})$. We  then reformulate
the minimax problem as the one of computing the minimax
marginals $\{p_a^*\}_{a\in F}$. By definition, these belong to
the so-called marginal polytope \cite{WainwrightJordan}
\begin{align*}
\MARG(G) \equiv \Big\{\{p_a\}_{a\in F} & \Big|\;
p_a(\uxa) = \sum_{\ux_{V\backslash \partial{a}}} p(\ux),\\
     & \text{ for some distribution } p(\ux)
\Big\}\, .
\end{align*}

\noindent Problem \ref{eqn:main_problem} can therefore be restated as an LP over
$\MARG(G)$:
\begin{equation}
\begin{split}
\label{eqn:minimax}
&\underset{p_a(\uxa)}{\text{maximize}} \quad \sum_{a \in F}
\underset{\theta_a}{\min} \Big[ \sum_{\underline{x}_{\partial a}}
p_a(\uxa)\psi_a(\underline{x}_{\partial a}; \theta_a) \Big],\\
&\text{s.t.} \quad \quad \quad \quad \{p_a\}_{a\in F} \in \MARG(G)\, .
\end{split}
\end{equation}
Here we used the fact that the $\min$ over $q$ in the simplex is
necessarily achieved at one extremal point, i.e. at a pure strategy.
In general, $\MARG(G)$ does not possess a polynomial separation oracle
and therefore this problem is not tractable. Instead, we relax it to the
set of locally consistent marginals on $G$, denoted by $\LOC(G)$  
\begin{equation}
\begin{split}
\label{eqn:LOC_def}
&\LOC(G)=\\
&\Bigg\{ \{p_a\}_{a\in F}\;
\begin{array}{|ll}
\exists p_i(x_i) : p_i(x_i) \geq 0, & \underset{x_i}{\sum} p_i(x_i) = 1\\
p_a(\uxa) \geq 0, & \underset{\ux_{\partial{a} \backslash i}}{\sum} p_a(\uxa) = p_i(x_i) 
\end{array}
\Bigg\}.
\end{split}
\end{equation}
We then have the following relaxation of problem~\eqref{eqn:minimax} to the local polytope. 
\hspace{-1cm}
\begin{equation}
\begin{split}
\label{eqn:minimax_r}
 (\text{P}_0): \quad &\underset{p_a(\uxa)}{\text{maximize}} \quad \sum_{a \in F}
\underset{\theta_a}{\min} \Big[ \sum_{\underline{x}_{\partial a}}
p_a(\uxa)\psi_a(\underline{x}_{\partial a}; \theta_a) \Big]\, ,\\
&\text{s.t.} \quad \quad \quad \quad \{p_a\}_{a\in F} \in \LOC(G)\, .
\end{split}
\end{equation}
If $G$ is  a tree, then $\text{\rm{LOC}}(G) \equiv
\text{\rm{MARG}}(G)$ and therefore this relaxation is exact. 

\section{Algorithm}

\label{sec:Algorithm}
\vspace{-0.1cm}

Here, we first present the alternating direction method of multipliers (ADMM) \cite{gabay1976dual}, \cite{glowinski1975approximation} algorithm for solving convex optimization problems and state a general result regarding its convergence properties.
Subsequently, we show how the optimization problem $\optproblemzero$ can be transformed to conform with the general form for the ADMM algorithm.
We derive the \robust algorithm from the transformed variant of the problem $\optproblemzero$ and obtain convergence guarantees using the result stated for the general case of ADMM algorithms.

\subsection{ADMM Algorithm}
What follows is a short presentation of the ADMM algorithm and its properties. 
The reader interested in a more comprehensive treatment can refer to \cite{boyd2010distributed}.
Consider the optimization problem 
\begin{equation}\label{eq:admm_optimization_form}
\begin{split}
&\underset{x \in \reals^n, z \in \reals^m}{\text{minimize}} \quad  f(x) + g(z),\\
&\text{s.t.} \quad \quad \quad \quad Ax-z=0.
\end{split}
\end{equation}
where $A\in \reals^{p\times n}$.
The \textit{augmented Lagrangian} for this problem is defined as
\begin{equation}
L_{\rho}(x, z, y) = f(x) + g(z) + y^T(Ax-z) +\frac{1}{2} \rho \|Ax-z\|_2^2.
\end{equation}
with $\rho >0$ a parameter and $\|\cdot\|_2$ indicating the $\ell_2$ norm.
The ADMM algorithm tries to solve the above optimization problem by starting from some initial estimates $(z^{(0)}, y^{(0)}= 0)$ and performing the following iteration
\begin{eqnarray}\label{eq:admm_update}
\begin{split}
x^{(t+1)} &= \underset{x}{\argmin} \; L_{\rho}(x, z^{(t)}, y^{(t)}) \, ,\\
z^{(t+1)} &= \underset{z}{\argmin} \; L_{\rho}(x^{(t+1)}, z, y^{(t)})\, ,\\ 
y^{(t+1)} &= y^{(t)} + \rho (A x^{(t+1)} - z^{(t+1)}).
\end{split}
\end{eqnarray}
The update rules in \eqref{eq:admm_update} closely resemble the dual gradient descent method where the dual is obtained from the augmented Lagrangian. This indeed is the gist of the \textit{method of multipliers}.
Despite the fact that the primal optimization is done in two steps and the augmented Lagrangian is used in place of the Lagrangian, the iteration \eqref{eq:admm_update} provably converges to the solution of \ref{eq:admm_optimization_form}.
Formally, assume the optimization problem \eqref{eq:admm_optimization_form} has a finite optimum value $p^*$. 
We say the Lagrangian $L(x,z, y)$ has a saddle point $(x^*,z^*, y^*)$ if $L(x^*,z^*, y) \le L(x^*,z^*, y^*) \le L(x,z, y^*)$ for all $x$, $z$, and $y$. 
Then the following theorem holds.
\begin{thm}\label{th:admm_convergence}
(\cite{gabay1976dual} Theorem 3.1, \cite{eckstein1992douglas} Theorem 8, \cite{boyd2010distributed} Section 3.2)
Assume that the extended real valued functions $f(x)$ and $g(z)$ are closed, proper, and convex and the un-augmented Lagrangian $L_0(x,z,y)$ has a saddle point. 
Then
\begin{align}
&\lim_{t \to \infty} f(x^{(t)}) + g(z^{(t)}) \to p^* \nonumber\\
&\lim_{t \to \infty} Ax^{(t)} - z^{(t)} \to 0 \nonumber\\ 
&\lim_{t \to \infty} y^{(t)} \to y^*
\end{align}
%
\end{thm}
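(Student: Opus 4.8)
The plan is to follow the Lyapunov / Fej\'er-monotonicity argument that underlies the cited references. Throughout, write $p^{(t)} = f(x^{(t)}) + g(z^{(t)})$, let $r^{(t)} = Ax^{(t)} - z^{(t)}$ be the primal residual (so that the dual update reads $y^{(t+1)} = y^{(t)} + \rho r^{(t+1)}$), and fix a saddle point $(x^*, z^*, y^*)$ of $L_0$, which exists by hypothesis and necessarily satisfies $Ax^* = z^*$ and $f(x^*)+g(z^*)=p^*$. The goal is to exhibit a nonnegative quantity that decreases along the iteration by an amount controlling the residuals, and then to upgrade residual decay to objective and dual convergence.

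First I would extract two scalar inequalities from the subproblem optimality conditions in \eqref{eq:admm_update}. Writing stationarity for the $x$- and $z$-updates and substituting the dual update shows that $x^{(t+1)}$ minimizes $f(x) + (y^{(t+1)} + \rho(z^{(t+1)} - z^{(t)}))^{T} Ax$ and that $z^{(t+1)}$ minimizes $g(z) - (y^{(t+1)})^{T} z$. Evaluating these two variational inequalities at $x^*$ and $z^*$ and adding them, using convexity of $f$ and $g$ together with $Ax^* = z^*$, yields the upper bound
\begin{equation*}
p^{(t+1)} - p^* \le -(y^{(t+1)})^{T} r^{(t+1)} - \rho\,(z^{(t+1)} - z^{(t)})^{T}(r^{(t+1)} + z^{(t+1)} - z^*).
\end{equation*}
A matching lower bound, $p^* - p^{(t+1)} \le (y^*)^{T} r^{(t+1)}$, follows at once from the saddle-point inequality $L_0(x^*, z^*, y^*) \le L_0(x^{(t+1)}, z^{(t+1)}, y^*)$. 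Adding the two bounds eliminates $p^{(t+1)} - p^*$ and produces the master inequality
\begin{equation*}
(y^{(t+1)} - y^*)^{T} r^{(t+1)} + \rho\,(z^{(t+1)} - z^{(t)})^{T}(r^{(t+1)} + z^{(t+1)} - z^*) \le 0.
\end{equation*}

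Next I would introduce the Lyapunov function $V^{(t)} = \rho^{-1}\|y^{(t)} - y^*\|_2^2 + \rho\|z^{(t)} - z^*\|_2^2$. Expanding $V^{(t+1)} - V^{(t)}$ via $y^{(t+1)} - y^* = (y^{(t)} - y^*) + \rho r^{(t+1)}$ and inserting the master inequality, the terms involving $z^{(t+1)} - z^*$ cancel and one is left with $V^{(t+1)} - V^{(t)} \le -\rho\|r^{(t+1)} + (z^{(t+1)} - z^{(t)})\|_2^2$. To split this into separate residual terms I would add the monotonicity relation $(y^{(t+1)} - y^{(t)})^{T}(z^{(t+1)} - z^{(t)}) \ge 0$, obtained by comparing the $z$-updates at two consecutive steps (a consequence of convexity of $g$); since $y^{(t+1)} - y^{(t)} = \rho r^{(t+1)}$, this makes the cross term nonnegative and gives the descent relation
\begin{equation*}
V^{(t+1)} \le V^{(t)} - \rho\|r^{(t+1)}\|_2^2 - \rho\|z^{(t+1)} - z^{(t)}\|_2^2.
\end{equation*}
Telescoping over $t$ bounds $\sum_t(\|r^{(t+1)}\|_2^2 + \|z^{(t+1)} - z^{(t)}\|_2^2) \le V^{(0)}/\rho < \infty$, forcing $r^{(t)} \to 0$ (the feasibility claim) and $z^{(t+1)} - z^{(t)} \to 0$, while also showing $\{y^{(t)}\}$ and $\{z^{(t)}\}$ are bounded. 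Feeding $r^{(t)} \to 0$, $z^{(t+1)} - z^{(t)} \to 0$ and boundedness back into the two bounds above sandwiches $p^{(t)} - p^*$ between two null sequences, which gives the objective claim $p^{(t)} \to p^*$.

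The main obstacle is the final assertion $y^{(t)} \to y^*$. The descent inequality only shows that $V^{(t)}$ is nonincreasing and bounded below, hence convergent, but this does not by itself identify the limit of $y^{(t)}$, and a priori the iterates could accumulate at several distinct saddle points. To close the gap I would argue by Fej\'er monotonicity with respect to the closed convex set of saddle points: boundedness yields convergent subsequences, the vanishing of $r^{(t)}$ and $z^{(t+1)} - z^{(t)}$ together with closedness of $\partial f$ and $\partial g$ identify every cluster point as a saddle point, and convergence of $V^{(t)}$ (measured against an arbitrary fixed saddle point) then forces the whole sequence to converge to a single one. Equivalently, and more economically, one may invoke the identification of ADMM with Douglas--Rachford splitting applied to the dual, whose iteration map is firmly nonexpansive; convergence of the full sequence to a fixed point is exactly the content of the Eckstein--Bertsekas result cited in the statement, and I would appeal to it for this last step.
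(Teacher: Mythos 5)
The paper does not actually prove this theorem: it is imported verbatim from the literature, with the proof delegated to the cited references (Gabay, Eckstein--Bertsekas, Boyd et al.), so there is no in-paper argument to compare yours against. Measured against the standard proof in those references, your proposal is essentially a correct reconstruction of it, specialized to the constraint $Ax - z = 0$ (i.e., $B = -I$, $c = 0$ in the usual general formulation): the two variational inequalities extracted from the subproblem optimality conditions, the saddle-point lower bound, the master inequality obtained by adding them, the Lyapunov function $V^{(t)} = \rho^{-1}\|y^{(t)}-y^*\|_2^2 + \rho\|z^{(t)}-z^*\|_2^2$, the descent relation split via the monotonicity of $\partial g$, and the telescoping step are exactly the argument of Boyd et al.\ (Appendix A of the cited monograph), and they correctly deliver the residual and objective claims. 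The one place to be careful is the dual convergence $y^{(t)} \to y^*$: your Fej\'er-monotonicity sketch identifies cluster points of $(z^{(t)}, y^{(t)})$ as saddle points ``using closedness of $\partial f$ and $\partial g$,'' but that step is delicate because nothing in the hypotheses forces $\{x^{(t)}\}$ to be bounded or to have cluster points, so the saddle-point property cannot be read off directly from the subgradient inclusion involving $f$; one has to pass through the dual problem (or impose extra conditions on $A$). The clean way to close this, which you also propose, is to invoke the Eckstein--Bertsekas identification of ADMM with Douglas--Rachford splitting applied to the dual, whose firmly nonexpansive iteration yields convergence of the full sequence $y^{(t)}$; since that result is among the theorem's own citations, appealing to it for this last step is entirely appropriate and matches how the paper itself treats the theorem.
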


\subsection{\robust Algorithm}

\begin{table*}[!t] \label{tbl:algorithm}
\caption{\robust Algorithm}
{\normalsize
\vspace{-10pt}
\noindent\rule{\textwidth}{1pt}
\vspace{-2pt}\\ 
{\bf Robust Max-Product:}\\
\vspace{4pt}
\rule{\textwidth}{1pt}\\
\vspace{1pt}
\textbf{Input:} Factor graph $G(V, F, E)$, potential functions $\{\psi_a(\xda, \theta_a\}_{a \in F}$\\
\textbf{Output:} Local marginals $\{p_a(\xda)\}_{a \in F}$\\
${\bf 1.}$ Initialize:
\begin{align*}
&\qquad  u^{(0)}_{ai}(x) = 0, \quad \forall \; (a,i) \in E, \; x \in \cX  \hspace{8cm}\\
&\qquad  p^{(0)}_{i}(x) = \frac{1}{|\cX|}, \quad \forall \; i \in V, \; x \in \cX
\end{align*}
${\bf 2.}$ Update until convergence: \\
\quad At the factor nodes:
\begin{equation*}
\begin{array}{l l l}
\qquad   \left\{p_a^{(t+1)}, \lambda_a^{(t+1)} \right\}  =  &\underset{p_a, \lambda_a}{\argmin} \; & \lambda_a + \frac{\rho}{2} \underset{\substack{i \in \da \\ x_i \in \cX}}{\sum} \left(\underset{\uxai}{\sum} p_a(\uxai; x_i) - p_i^{(t)}(x_i) - u_{ai}^{(t)}(x_i) \right)^2\\
\vspace{10pt}
& {\rm s.t.} & \lambda_a + \sum_{\uxa} p_a(\uxa)\psi_a(\uxa; \theta_a) \ge 0, \quad \forall{a,\theta_a} \\
\vspace{10pt}
& & p_a(\uxa) \ge 0, \hspace{3.8cm} \forall \xda \in \cX^{|\da|} \\
\end{array} \hspace{4cm}
\end{equation*}
\quad At the variable nodes: \\
\begin{align*}
&\qquad p_i^{(t+1)} = \Pi_{\cS^{|\cX| -1}} \left( \frac{1}{|\di|}\underset{a \in \di}{\sum}\left\{\underset{\uxai}{\sum} p^{(t)}_a(\uxai; \cdot)  - u_{ai}^{(t)}(\cdot) \right\}\right),\\
&\qquad u_{ai}^{(t+1)}(x_i) = u_{ai}^{(t)}(x_i) + \underset{\uxai}{\sum} p^{(t+1)}_a(\uxai; x_i) - p_i^{(t+1)}(x_i) \qquad \forall \; i, \; a \in \di, \; x_i \in \cX. \hspace{4cm}
\end{align*}
${\bf 3.}$ Return: $\{p_a\}_{a\in F}$. \vspace{5pt}\\
\rule{\textwidth}{1pt}
}
\end{table*}

Note that the epigraph form of the optimization problem $\optproblemzero$ is given by 
\begin{equation}\label{eq:optimization_epi}
\begin{array}{l l r}
{\text{minimize}}   &  \underset{a \in F}{\sum} \lambda_a &\\
\text{s.t.}  & \lambda_a + \underset{\uxa}{\sum} p_a(\uxa)\psi_a(\uxa; \theta_a) \ge 0, & \forall{a,\theta_a} \\
& p_i(x_i) = \underset{\uxai}{\sum} p_a(\uxa) , &\hspace{-12pt}\forall{(i,a) \in E,x_i} \\
&\sum_{x_i} p_i(x_i) = 1, &\forall{i \in V} \\
&p_a(\uxa) \geq 0,& \hspace{-17pt}\forall{(i,a) \in E,\uxa}\\
&p_i(x_i) \geq 0,  &\forall{i \in V,x_i}\\
&&
\end{array}
\end{equation}
where the minimization is over $\{\lambda_a\}_{a \in F}, \{p_a\}_{a \in F}, \{p_i\}_{i \in V}$ and $\LOC(G)$ is represented in terms of the set of marginals.
Define the indicator function $\mathbb{I}(\cdot)$ as
\begin{equation}
\mathbb{I}(x) = \left\{
\begin{array}{l l}
0 \qquad & \text{if } x = \text{TRUE},\\
\infty \qquad & \text{if } x = \text{FALSE}.
\end{array}
\right.
\end{equation}
%
Furthermore, let $f(\{\lambda_a\}_{a\in F}, \{p_a\}_{a\in F}) = \sum_{a \in F} \tilde{f}(\lambda_a, p_a)$ and $g(\{p_i\}_{i \in V}) = \sum_{i \in V} \tilde{g}(p_i)$ whereby $\tilde{f}(\lambda_a, p_a)$ and $\tilde{g}(p_i)$ are defined as
\begin{align}
\tilde{f}(\lambda_a, p_a)  &=  \lambda_a
 + \sum_{\theta_a\in \Theta }\bbI(\lambda_a + \sum_{\uxa} p_a(\uxa)\psi_a(\uxa; \theta_a) \ge 0) \nonumber\\
 &\;\; + \sum_{\xda \in \cX^{|\da|}} \bbI(p_a(\uxa) \geq 0), 
\end{align}
and 
\begin{align}
& \tilde{g}\left(p_i \right) =  \bbI(p_i \in \cS^{|\cX|-1}).
\end{align}
Here, $\cS^{|\cX|-1}$ is the $|\cX|-1$ dimensional simplex and $p_a$ and $p_i$ are the $|\cX|^{|\da|}$ and $|\cX|$ dimensional real vectors respectively.
It is easy to see that the extended real valued functions 
\begin{align*}
f&: \reals^{|F| + \sum_{a \in F}|\cX|^{|\da|}} \to (-\infty, +\infty] \\
g&: \reals^{|V| |\cX|} \to (-\infty, +\infty]
\end{align*}
are closed, convex, and proper.

Using the functions $f$ and $g$, the optimization problem \eqref{eq:optimization_epi} can be restated as
\begin{align*}
&\optproblem: \\
&\begin{array}{l l l }
{\text{minimize}}  & f(\{\lambda_a\}_{a\in F}, \{p_a\}_{a\in F}) + g(\{p_i\}_{i \in V}) & \\
\text{s.t.}              & p_i(x_i) = \underset{\uxai}{\sum} p_a(\uxai;x_i) ,                                    &\hspace{-10pt}\forall (a,i) \in E,\\
&& \hspace{-10pt} \forall x_i \in \cX,
\end{array}
\end{align*}
where the minimization is over $\{\lambda_a\}_{a \in F}$, $\{p_a\}_{a \in F}$, and $\{p_i\}_{i \in V}$. Optimization problem $\optproblem$ follows the form of the general problem in Eq. \eqref{eq:admm_optimization_form} and can be solved using the ADMM algorithm.
The augmented Lagrangian for problem $\optproblem$ can be written as
\begin{eqnarray}\label{eq:augmented_lagrangian}
\begin{split}
& L_{\rho}\left(\{\lambda_a\}_{a\in F}, \{p_a\}_{a\in F}, \{p_i\}_{i\in V}, \{u_{ai}\}_{(a,i)\in E}\right)=\\
&\quad \qquad \sum_{a \in F} \tilde{f}(\lambda_a, p_a) + \sum_{i \in V} \tilde{g} \left(p_i \right)  \\
& \quad  +\sum_{(a,i) \in E, x_i \in \cX}  u_{ai}(x_i) \, \Big(p_i(x_i) - \sum_{\uxai} p_a(\uxai;x_i) \Big)  \\
& \quad  +\sum_{(a,i) \in E, x_i \in \cX} \rho \, \Big(p_i(x_i) - \sum_{\uxai} p_a(\uxai;x_i)\Big)^2, 
\end{split}
\end{eqnarray}
where $\rho$ is a parameter and $u_{a,i}(x_i)$ are the dual variables.
Notice that the special form of the constraint results in the quadratic penalty being block separable in $\{p_a\}_{a \in F}$.
Furthermore, the function $f$ is also block separable in $\{p_a\}_{a \in F}$, as well. 
Similarly, the quadratic penalty and the function $g$ are separable in $\{p_i\}_{i \in V}$.
These facts enable us to further decompose the first two steps of the ADMM iteration (Eq. \eqref{eq:admm_update}) and perform the optimization at the corresponding check and variable node locally. The resulting algorithm is presented in Table~\ref{tbl:algorithm}.

\subsection{Convergence of the \robust Algorithm}
\begin{figure}[t]
\phantom{a}
\vspace{-0.25cm}
\includegraphics[width=.44\textwidth]{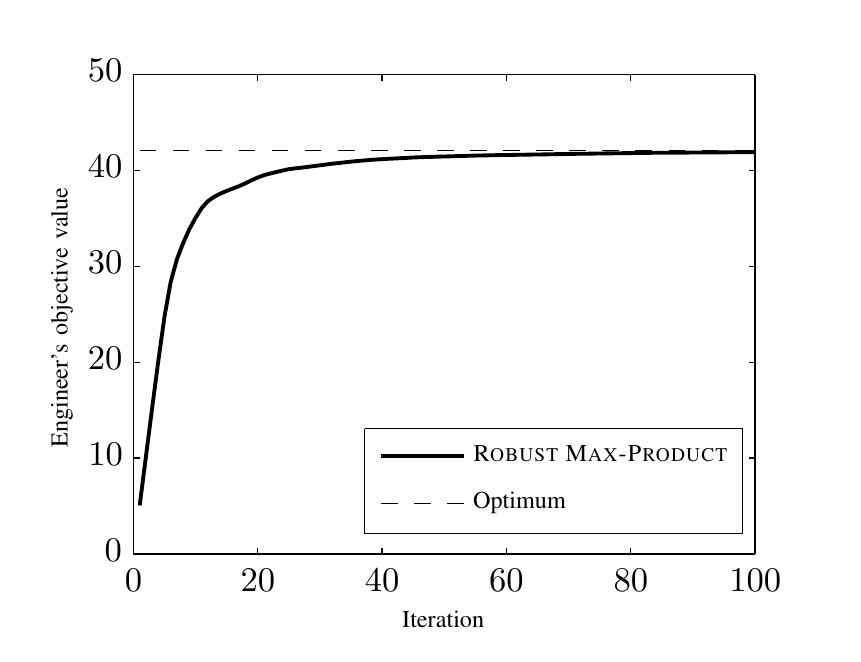}
\vspace{-0.25cm}
\caption{Engineer's objective vs. iteration.}
\vspace{-0.25cm}
\label{fig:obj_vs_itr}
\end{figure}
\begin{figure}[t]
\phantom{a}
\vspace{-0.25cm}
\includegraphics[width=.44\textwidth]{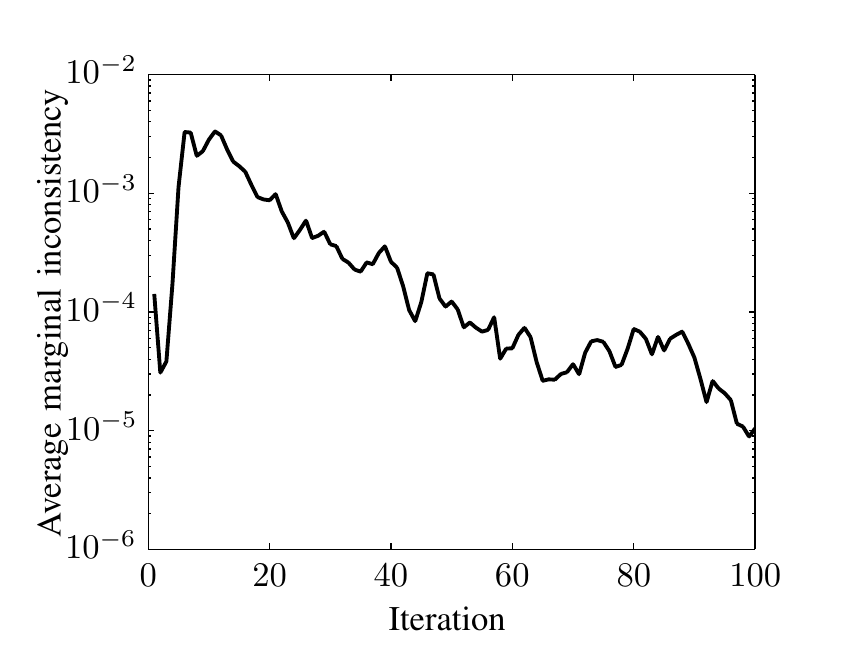}
\vspace{-0.25cm}
\caption{Average marginal inconsistency vs. iteration.}
\vspace{-0.25cm}
\label{fig:res_vs_itr}
\end{figure}

For $(a,i) \in E$, $x_i \in \cX$ and local marginals $\{p_a\}_{a \in F}$, $\{p_i\}_{i \in V}$,  define the marginal inconsistency residual $r_{ai}(x_i)$  as 
\begin{equation}
r_{ai}(x_i) = p_i(x_i) - \sum_{\uxai}{p_a(\uxai;x_i)}.
\end{equation}
Let $\ur \in \mathbb{R}^{|E|\times|\cX|}$ be the vector of residuals defined as $\ur = (\{r_{ai}(x_i)\}_{(a,i) \in E, x_i \in \cX})$. In particular, let $\ur^{(t)}$ be the marginal inconsistency residual at iteration $t$ of the \robust algorithm.

Define $\cost^{(t)} = \sum_{a \in F} \lambda_a^{(t)}$, the cost function at iteration $t$ of the \robust algorithm and let $\cost^*$ be the optimum value of the optimization problem \eqref{eqn:minimax_r}.
Figures~\ref{fig:obj_vs_itr} and \ref{fig:res_vs_itr} show $\cost^{(t)}$ and $\frac{1}{|E||\cX|} \|\ur^{(t)}\|_1 $ as a function of $t$ (iteration) for the Ising model described in Section \ref{sec:examples_main} with $\Delta = 1$. 
Figure~\ref{fig:res_vs_itr} shows that the marginals quickly become nearly consistent while Fig~\ref{fig:obj_vs_itr} demonstrates that the objective value converges to the optimum value as number of iteration increases to a modest number. 

The following theorem provides theoretical guarantees for the behavior observed in Figures \ref{fig:obj_vs_itr} and  \ref{fig:res_vs_itr}. In particular, it states that the sequence of local marginals in the \robust algorithm converges to a set of locally consistent marginals that achieves the optimum payoff for the Engineer.
\begin{thm}\label{th:convergence}
For any graph $G(F,V,E)$ and set of potential functions $\{\psi_a(\xda)\}_{a \in F}$ the followings hold.
\begin{itemize}
\item[{\rm (i)}] $\underset{t \to \infty}{\lim} \;\ur^{(t)} = \underline{0}.$
\item[{\rm (ii)}] $\underset{t \to \infty}{\lim} \; \cost^{(t)}  = \cost^*.$
\end{itemize}
\end{thm}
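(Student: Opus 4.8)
The plan is to apply the general ADMM convergence result, Theorem~\ref{th:admm_convergence}, to the reformulated problem $\optproblem$. The excerpt has already cast $\optproblem$ in the canonical ADMM form \eqref{eq:admm_optimization_form} under the identification $x \leftrightarrow (\{\lambda_a\}_{a\in F},\{p_a\}_{a\in F})$, $z \leftrightarrow \{p_i\}_{i\in V}$, with separable objective $f+g$ and coupling constraint $Ax-z=0$ given by the marginal-consistency equalities $p_i(x_i)=\sum_{\uxai} p_a(\uxai;x_i)$. Under this identification the constraint residual $Ax^{(t)}-z^{(t)}$ is, up to sign, exactly the marginal inconsistency residual $\ur^{(t)}$, and the abstract iteration \eqref{eq:admm_update} specializes to the local factor-node and variable-node updates of the \robust algorithm. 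Two hypotheses of Theorem~\ref{th:admm_convergence} then have to be checked: that $f,g$ are closed, proper and convex, and that the un-augmented Lagrangian $L_0$ admits a saddle point.

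The first hypothesis is already recorded in the excerpt, so the crux is the existence of a saddle point of $L_0$. I would obtain this from the polyhedral structure of the problem. Note that $\optproblem$ has the same optimal value as the linear program $\optproblemzero$ of \eqref{eqn:minimax_r}: the feasible set $\LOC(G)$ is a nonempty compact polytope (it contains, for instance, the product of uniform marginals), and because each $p_a$ ranges over a compact set and the potentials $\psi_a$ take finite values, the inner minimum over $\theta_a$ is bounded, so the $\lambda_a$ variables are bounded in the epigraph form \eqref{eq:optimization_epi} and the optimal value $\cost^*$ is finite and attained. A feasible and bounded linear program enjoys strong duality with attainment of both primal and dual optima, and the resulting primal--dual optimal triple $(x^*,z^*,y^*)$ is precisely a saddle point of the Lagrangian $L_0$ that dualizes the coupling constraint. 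Since the relevant constraint qualification is purely polyhedral, no Slater-type condition is needed.

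With both hypotheses verified, Theorem~\ref{th:admm_convergence} gives $Ax^{(t)}-z^{(t)} \to 0$ and $f(x^{(t)})+g(z^{(t)}) \to \cost^*$. The first limit is exactly claim~(i), since $Ax^{(t)}-z^{(t)}$ is the residual vector $\ur^{(t)}$. For claim~(ii) I would use that each ADMM half-step is an $\argmin$ of a function containing the indicator terms of $f$ and $g$: because feasible points exist, the minimizers produced at every iteration lie in the effective domains, i.e. they satisfy $\lambda_a+\sum_{\uxa} p_a(\uxa)\psi_a(\uxa;\theta_a)\ge 0$, $p_a\ge 0$ and $p_i\in\cS^{|\cX|-1}$, so all indicator contributions vanish along the trajectory. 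Consequently $f(x^{(t)})+g(z^{(t)})=\sum_{a\in F}\lambda_a^{(t)}=\cost^{(t)}$ for every $t$, and the objective limit becomes $\cost^{(t)}\to\cost^*$, which is claim~(ii).

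The main obstacle I anticipate is the saddle-point hypothesis: one must be careful that $f$ and $g$ are extended-real-valued, encoding the inequality and simplex constraints as indicators, so off-the-shelf strong-duality statements do not apply verbatim. The clean route is to lean on the purely polyhedral nature of $\optproblemzero$, for which strong duality together with primal--dual attainment holds whenever the program is feasible and bounded, both of which follow from compactness of $\LOC(G)$ and finiteness of the potentials. A secondary point needing care is the identity $\cost^{(t)}=f(x^{(t)})+g(z^{(t)})$, which hinges on the $\argmin$ steps keeping the iterates inside the effective domains of $f$ and $g$; this should be argued from feasibility rather than taken for granted.
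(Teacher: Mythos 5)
Your proposal is correct and takes essentially the same route as the paper: the paper also reduces Theorem~\ref{th:convergence} to Theorem~\ref{th:admm_convergence} via a lemma asserting (a) feasibility of the epigraph LP \eqref{eq:optimization_epi} and (b) existence of a saddle point of the un-augmented Lagrangian \eqref{eq:augmented_lagrangian} with $\rho=0$ attaining $\sum_{a\in F}\lambda_a^*=\cost^*$, the saddle point being built from a primal--dual optimal pair of the LP. The only divergences are minor: the paper justifies strong duality by invoking Slater's theorem for the (strictly feasible) linear program, whereas you argue it from polyhedrality plus feasibility and boundedness (arguably the cleaner route, since equality constraints cannot be satisfied strictly), and you additionally make explicit why $f(x^{(t)})+g(z^{(t)})=\cost^{(t)}$ along the iterates, a point the paper leaves implicit.
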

The proof of this theorem can be obtained by applying the result of the following lemma to Theorem~\ref{th:admm_convergence}.
\begin{lem}
For any graph $G(F,V,E)$ and potential functions $\{\psi_a(\xda)\}_{a \in F}$, the optimization problem \eqref{eqn:minimax_r} is feasible. 
Furthermore, there exist $\{\lambda_a^*\}_{a \in F}$, $\{p_a^*\}_{a \in F}$, $\{p_i^*\}_{i \in V}$, and $\{u_{ai}\}_{(a,i) \in E}^*$ such that $(\{\lambda_a^*\}_{a \in F}$, $\{p_a^*\}_{a \in F}$, $\{p_i^*\}_{i \in V}$, $\{u_{ai}\}_{(a,i) \in E}^*)$ is a saddle point of the augmented Lagrangian \eqref{eq:augmented_lagrangian} with $\rho = 0$ and $\sum_{a \in F} \lambda_a^* = \cost^*$.
\end{lem}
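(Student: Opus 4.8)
The plan is to reduce the lemma to finite-dimensional linear-programming duality. First I would settle feasibility of $\optproblemzero$ by exhibiting the uniform local marginals $p_i(x_i)=1/|\cX|$ and $p_a(\uxa)=|\cX|^{-|\da|}$: these are nonnegative, satisfy the simplex constraints $\sum_{x_i}p_i(x_i)=1$, and obey the marginalization constraints $\sum_{\uxai}p_a(\uxa)=1/|\cX|=p_i(x_i)$, hence lie in $\LOC(G)$. Because $\Theta$ is finite and each $\psi_a$ is real valued, choosing $\lambda_a=-\min_{\theta_a}\sum_{\uxa}p_a(\uxa)\psi_a(\uxa;\theta_a)$ satisfies every payoff constraint in the epigraph form \eqref{eq:optimization_epi}, so $\optproblemzero$ is feasible.

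Next I would record that \eqref{eq:optimization_epi} is a genuine finite-dimensional linear program: the objective $\sum_a\lambda_a$ is linear, and for each fixed $\theta_a\in\Theta$ the payoff constraint, together with the marginalization equalities, the simplex equalities, and the nonnegativity constraints, is affine; finiteness of $\cX$, $\Theta$, $V$ and $F$ keeps the number of variables and constraints finite. Its value is bounded: at any feasible point $p_a$ is a probability vector (it sums to one by the marginalization and simplex constraints), so $|\sum_{\uxa}p_a(\uxa)\psi_a(\uxa;\theta_a)|\le\max_{\uxa,\theta_a}|\psi_a(\uxa;\theta_a)|$ and the optimal $\lambda_a$ lies in a bounded interval. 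Thus $\optproblemzero$ is a feasible, bounded LP whose optimum value $\cost^*$ is attained, with $\sum_a\lambda_a^*=\cost^*$ at any optimizer.

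The core step is to produce the saddle point of the $\rho=0$ Lagrangian \eqref{eq:augmented_lagrangian}. With $\rho=0$ the quadratic penalty drops and we are left with the un-augmented Lagrangian $L_0$, which is a \emph{partial} Lagrangian: it dualizes only the consistency constraints $p_i(x_i)=\sum_{\uxai}p_a(\uxai;x_i)$ with multipliers $u_{ai}(x_i)$, while the payoff, simplex, and nonnegativity constraints stay encoded inside the closed proper convex functions $\tilde f$ and $\tilde g$, whose finiteness domain is a polyhedron $Q$. I would invoke LP strong duality to obtain a primal optimizer $(\{\lambda_a^*\},\{p_a^*\},\{p_i^*\})\in Q$ and a dual optimizer $\{u_{ai}^*\}$ with no duality gap and $\sum_a\lambda_a^*=\cost^*$. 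The two saddle inequalities then fall out: since the primal optimizer satisfies the dualized equalities, the coupling term $\sum_{(a,i),\,x_i}u_{ai}(x_i)\bigl(p_i^*(x_i)-\sum_{\uxai}p_a^*(\uxai;x_i)\bigr)$ vanishes for every $u$, so $L_0$ is constant in $u$ there and $L_0(\cdot,u)\le L_0(\cdot,u^*)$; and zero duality gap together with optimality of $u^*$ says exactly that the primal optimizer minimizes $L_0(\cdot,u^*)$ over $Q$, which gives the reverse inequality.

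The step I expect to be delicate is justifying strong duality for this partial Lagrangian without a Slater point: the dualized constraints are equalities, which admit no strictly feasible interior, so the general convex saddle-point theorem does not apply directly. I would lean instead on the sharper behavior of linear programs, for which strong duality and attainment of both optima hold under mere feasibility and boundedness and for any split of the constraints into dualized (the linear equalities) and retained (the polyhedron $Q$); the two hypotheses were verified in the first two steps. Feeding the resulting saddle point into the hypothesis of Theorem~\ref{th:admm_convergence} then yields Theorem~\ref{th:convergence}.
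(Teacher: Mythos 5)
Your proposal is correct and takes essentially the same route as the paper: rewrite the problem in epigraph form \eqref{eq:optimization_epi} as a finite-dimensional LP, establish feasibility, and use strong duality to extract a primal--dual pair that forms a saddle point of the $\rho=0$ Lagrangian \eqref{eq:augmented_lagrangian}, with $\sum_{a\in F}\lambda_a^* = \cost^*$. The only divergence is the constraint qualification: the paper invokes Slater's theorem via a strictly feasible point (strictness being with respect to the inequality constraints absorbed into $\tilde{f}$ and $\tilde{g}$, which is exactly what the refined Slater condition for affine constraints requires), whereas you sidestep Slater entirely and rely on LP strong duality under feasibility plus boundedness---a valid, if slightly more conservative, substitute.
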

\begin{proof}
Consider the optimization problem in Eq. \eqref{eq:optimization_epi}. 
First note that given the potential functions $\{\psi_a(\xda)\}_{a \in F}$ are bounded, this problem has a strictly feasible point. Also it is a linear program. Hence the strong duality holds by Slater's theorem~\cite{Boyd} and the Lagrangian has a saddle point. Let $\{\lambda_a^*\}_{a\in F}, \{p_a^*\}_{a\in F}, \{p_i^*\}_{i\in V}$ be the values of the primal variables at this saddle point. 
Similarly, let $\{u_{ai}^*\}_{(a,i) \in E}$ be the values of the dual variables corresponding to the constraints $p_i(x_i) = \sum_{\uxai} p_a(\uxa)$ at this saddle point. 
Then $\{\lambda_a^*\}_{a\in F}, \{p_a^*\}_{a\in F}, \{p_i^*\}_{i\in V}$ are primal optimal for \eqref{eq:optimization_epi}.   
In particular,  $\sum_{a \in F} \lambda_a^* = \cost^*$.
Furthermore, it is easy to see that$(\{\lambda_a^*\}_{a\in F}, \{p_a^*\}_{a\in F}, \{p_i^*\}_{i\in V}, \{u_{ai}^*\}_{(a,i) \in E})$ is a saddle point of the Lagrangian in Eq. \eqref{eq:augmented_lagrangian} with $\rho=0$.

\end{proof}
\vspace{-0.1cm}

\section{Related work}
\label{sec:related_work}
\vspace{-0.1cm}

Several groups investigated the impact of graphical model structure on 
the computational properties of Nash equilibria 
\cite{Kearns,OrtizKearns,DaskalakisGraph,Elkind}. In particular,
Ortiz and Kearns \cite{OrtizKearns} proposed a message passing
algorithm (called {\sc NashProp}) to find 
Nash equilibria. However, as shown in \cite{Elkind}, the problem of
computing Nash equilibria is PPAD-complete  even on trees. 
Within graphical games studied in this literature, a
different player 
controls each vertex of a graph, and a game
is  played along each edge. A single player has at her disposal only a
small number of pure strategies (typically two), and the problem complexity
arises because of the large number of players.

\emph{Let us emphasize that the present paper studies a very different class
of models.} 
We consider a small fixed number of players (indeed 
in this paper only two players),  but each of them has at her disposal a
large number of pure strategies. The problem complexity is due to the
strategies proliferation.

The motivation for focusing on two-players zero-sum games
came from their relevance to
optimization and inference under model uncertainty.
A few authors \cite{Ihler,Varshney} have already analyzed the sensitivity 
of message passing algorithms to model uncertainty. However
these studies assumed a probability distribution over model
parameters, which is very restrictive in a high-dimensional setting,
or carried out a perturbation analysis, without constructing more
robust algorithms.

ADMM and many related algorithms (Uzawa's algorithm, Douglas-Rachford splitting, proximal method, Bregman iterative methods, etc.) have been around for a few decades. 
However, recent years have seen a surge of interest in these algorithms in many fields. 
The reader can refer to \cite{boyd2010distributed} for many examples in the field of statistical learning.
Closer to the spirit of this paper, \cite{ravikumar2010message} uses the technique of Bregman projection to obtain fractional solution for the maximum a posteriori probability (MAP) problem in graphical models. The problem addressed in this paper is fundamentally different from this work in that we consider the case of adversarial uncertainty in the model. 

\newpage
\bibliographystyle{abbrv}
\bibliography{references_aj2}

\begin{thebibliography}{10}

\bibitem{NemirovskiBook}
A.~Ben-Tal, L.~E. Ghaoui, and A.~Nemirovski.
\newblock {\em {Robust Optimization}}.
\newblock Princeton University Press, Princeton, 2009.

\bibitem{BergerBook}
J.~O. Berger.
\newblock {\em Statistical Decision Theory and Bayesian Analysis}.
\newblock Springer, New York, 1985.

\bibitem{BinmoreBook}
K.~Binmore.
\newblock {\em Playing for Real: A Text on Game Theory}.
\newblock Oxford University Press, Oxford, UK, 2007.

\bibitem{boyd2010distributed}
S.~Boyd, N.~Parikh, E.~Chu, B.~Peleato, and J.~Eckstein.
\newblock Distributed optimization and statistical learning via the alternating
  direction method of multipliers.
\newblock {\em Machine Learning}, 3(1):1--123, 2010.

\bibitem{Boyd}
S.~Boyd and L.~Vandenberghe.
\newblock {\em {Convex Optimization}}.
\newblock Cambridge University Press, Cambridge, 2004.

\bibitem{CesaBianchi}
N.~Cesa-Bianchi and G.~Lugosi.
\newblock {\em Prediction, Learning, and Games}.
\newblock Cambridge University Press, Cambridge, UK, 2006.

\bibitem{DaskalakisGraph}
C.~Daskalakis and C.~H. Papadimitriou.
\newblock {Computing pure Nash equilibria in graphical games via Markov Random
  Fields}.
\newblock In {\em Proc. of the 7th ACM conference on Electronic Commerce},
  pages 91--99, 2006.

\bibitem{eckstein1992douglas}
J.~Eckstein and D.~Bertsekas.
\newblock On the douglas—rachford splitting method and the proximal point
  algorithm for maximal monotone operators.
\newblock {\em Mathematical Programming}, 55(1):293--318, 1992.

\bibitem{Elkind}
E.~Elkind, L.~Goldberg, and P.~Goldberg.
\newblock Nash equilibria in graphical games on trees revisited.
\newblock In {\em Proceedings of the 7th ACM conference on Electronic
  commerce}, pages 100--109, 2006.

\bibitem{gabay1976dual}
D.~Gabay and B.~Mercier.
\newblock A dual algorithm for the solution of nonlinear variational problems
  via finite element approximation.
\newblock {\em Computers \& Mathematics with Applications}, 2(1):17--40, 1976.

\bibitem{glowinski1975approximation}
R.~Glowinski and A.~Marroco.
\newblock Sur l'approximation, par {\'e}l{\'e}ments finis d'ordre un, et la
  r{\'e}solution, par penalisation-dualit{\'e}, d'une classe de probl{\`e}mes
  de dirichlet non lin{\'e}aires.
\newblock {\em Rev. Franc. Automat. Inform. Rech. Operat.}, 9:41--76, 1975.

\bibitem{cvx}
M.~Grant and S.~Boyd.
\newblock {CVX}: Matlab software for disciplined convex programming, version
  1.21.
\newblock http://cvxr.com/cvx, Apr. 2011.

\bibitem{Ihler}
A.~Ihler, J.~W.~F. III, and A.~S. Willsky.
\newblock Loopy belief propagation: Convergence and effects of message errors.
\newblock {\em J. Mach. Learn. Res.}, 6:905--936, 2005.

\bibitem{Kearns}
M.~Kearns.
\newblock Graphical games.
\newblock In Nisan, Roughgarden, Tardos, and Vazirani, editors, {\em
  Algorithmic Game Theory}. Cambridge University Press, 2007.

\bibitem{KollerBook}
D.~Koller and N.~Friedman.
\newblock {\em {Probabilistic Graphical Models: Principles and Techniques}}.
\newblock The MIT Press, Cambridge, MA, 2009.

\bibitem{GraphMed}
P.~Lucas, L.~van~der Gaag, and A.~Abu-Hannac.
\newblock Bayesian networks in biomedicine and health-care.
\newblock {\em Artif. Intelligence in Medicine}, 30:201--214, 2004.

\bibitem{Nisan:2007:AGT:1296179}
N.~Nisan, T.~Roughgarden, E.~Tardos, and V.~V. Vazirani.
\newblock {\em Algorithmic Game Theory}.
\newblock Cambridge University Press, New York, NY, USA, 2007.

\bibitem{OrtizKearns}
L.~E. Ortiz and M.~Kearns.
\newblock Nash propagation for loopy graphical games.
\newblock In {\em In Neural Information Processing Systems}, pages 793--800.
  MIT Press, 2003.

\bibitem{ravikumar2010message}
P.~Ravikumar, A.~Agarwal, and M.~Wainwright.
\newblock Message-passing for graph-structured linear programs: Proximal
  methods and rounding schemes.
\newblock {\em The Journal of Machine Learning Research}, 11:1043--1080, 2010.

\bibitem{Varshney}
L.~Varshney.
\newblock {Performance of LDPC Codes Under Noisy Message-Passing Decoding}.
\newblock In {\em IEEE Information Theory Workshop}, pages 178--183, Tahoe
  City, CA, 2007.

\bibitem{VonNeumannBook}
J.~von Neumann and O.~Morgenstern.
\newblock {\em Theory of Games and Economic Behavior}.
\newblock Princeton University Press, Princeton, 2007.

\bibitem{WainwrightJordan}
M.~Wainwright and M.~I. Jordan.
\newblock Graphical models, exponential families, and variational inference.
\newblock {\em Found. Trends Mach. Learn.}, 1:1--305, 2008.

\end{thebibliography}

\newpage

\end{document}